\long\def\symbolfootnote[#1]#2{\begingroup%
\def\thefootnote{\fnsymbol{footnote}}\footnote[#1]{#2}\endgroup}
\newcommand{\N}{\mathbb N}
\newcommand{\C}{\mathcal C}
\newcommand{\F}{\mathbb F}
\def\imod#1{\allowbreak\mkern10mu({\operator@font mod}\,\,#1)}
\theoremstyle{plain}
\newtheorem{thm}{Theorem}[section]
\theoremstyle{definition}
\newtheorem{defn}{Definition}[section]
\newtheorem{lemma}{Lemma}[section]
\newtheorem{prop}{Proposition}[section]
\newtheorem{cor}{Corollary}[section]
\newtheorem{remark}{Remark}[section]
\newtheorem{example}{Example}[section]
\numberwithin{equation}{section}
\newcommand{\ignore}[1]{}
\newcommand{\mynote}[1]{}
\begin{document}
\title[A Class of $(n, k, r, t)_i$ LRCs Via Parity Check Matrix]{A Class of $(n, k, r, t)_i$ LRCs Via Parity Check Matrix}

\author[Mukhopadhyay]{Deep Mukhopadhyay}
\author[Bhowmick]{Sanjit Bhowmick}
\author[Hansda]{Kalyan Hansda}
\author[Bagchi]{Satya Bagchi}

\address[Mukhopadhyay]{Department of Mathematics, National Institute of Technology Durgapur, Durgapur-713209, India.}
\email{deepmukh1998@gmail.com}

\address[Bhowmick]{Department of Mathematics, Indraprastha Institute of Information Technology Delhi, New Delhi-713209, India.}
\email{sanjitbhowmick392@gmail.com}

\address[Hansda]{Department of Mathematics, Visva-Bharati University, Santiniketan, Bolpur-731235, West Bengal, India.}
\email{kalyanh4@gmail.com}

\address[Bagchi]{Department of Mathematics, National Institute of Technology Durgapur, Durgapur-713209, India.}
\email{satya.bagchi@maths.nitdgp.ac.in}

\subjclass[2020]{11T71, 94B05, 94B60}
\keywords{Linear codes, parity check matrix, locally repairable codes, optimal $(n, k, r, t)_i$ LRCs.}

\maketitle

\begin{abstract}
A code is called $(n, k, r, t)$ information symbol locally repairable code \big($(n, k, r, t)_i$ LRC\big) if each information coordinate can be achieved by at least $t$ disjoint repair sets, containing at most $r$ other coordinates. This paper considers a class of $(n, k, r, t)_i$ LRCs, where each repair set contains exactly one parity coordinate. We explore the systematic code in terms of the standard parity check matrix. First, some structural features of the parity check matrix are proposed by showing some connections with the membership matrix and the minimum distance optimality of the code. Next to that, parity check matrix based proofs of various bounds associated with the code are placed. In addition to this, we provide several constructions of optimal $(n, k, r, t)_i$ LRCs, with the help of two Cayley tables of a finite field. Finally, we generalize a result of $q$-ary $(n, k, r)$ LRCs to $q$-ary $(n, k, r, t)$ LRCs.
\end{abstract}
\section{Introduction}
Codes for distributed storage systems become a recent advances in Coding Theory. To increase the data reliability, while storing data in distributed storage systems, the method that usually employed is replication. However, in this case, the storage cost is much higher than the data reliability. Thus, instead of replication, coding theory (Erasure Coding) is then used, where the data is divided into $k$ parts and stored into $n$ different data nodes. It is efficient to overcome in terms of both node failures and storage overhead problems, although it suffers from substantial repair costs. To deal this, some coding theory techniques are designed. Locally repairable code is one of them.
 
A code symbol has locality $r$, if it can be fetched by at most $r$ other code symbols. An $[n,k,d]$ linear code with $r$ locality is said to be $(n,k,r)$ locally repairable code (LRC). LRCs with $r$ locality for information symbols are called \emph{information symbol} locally repairable codes \big($(n, k, r)_i$ LRC\big), and if all symbols have locality $r$, then the formatted codes are \emph{all symbol} locally repairable codes \big($(n, k, r)_a$ LRC\big).

Gopalan \textit{et al.} \cite{Gopalan} firstly introduced $(n,k,r)_i$ LRCs and gave lucidly a Singleton like upper bound on the minimum distance which is given by
\begin{equation}\label{LRC_eq1}
d \leq n-k - \left\lceil\frac{k}{r}\right\rceil + 2.
\end{equation} 
Later on, it is seen that the bound (\ref{LRC_eq1}) is true for $(n,k,r)_a$ LRCs also. Following the same approach of Gopalan \textit{et al.}, Forbes and Yekhanin proposed locality and proved the bound (\ref{LRC_eq1}) for non-linear codes in \cite{Forbes}. A family of optimal $(n,k,r)$ LRCs against the bound (\ref{LRC_eq1}) was constructed by Tamo and Barg \cite{Tamo2014}. Hao \textit{et al.} \cite{Hao} overviewed $(n,k,r)$ LRCs via parity check matrix and gave different possible classes of optimal binary $(n,k,r)$ LRCs. In an LRC, the set of other symbols (positions) that help to reconstruct a particular symbol is called the repair set or recovery set of that symbol. For an $(n,k,r)$ LRC, the number of repair sets for a code symbol is one. Subsequently, this notion is generalized by choosing $t$ repair sets for a code symbol, which is the availability of a LRC. LRCs with $t$ availability are referred as $(n, k, r, t)$ LRCs. The notion of availability was initiated by Rawat \textit{et al.} in \cite{Rawat}.

In an $(n, k, r, t)$ LRC, the distinct subsets of the form $\Gamma_j(i) \cup \{i\}$ \big(where $\Gamma_j(i)$ is the $j$th repair set of the $i$th coordinate\big) are called the local groups (subsets) of the code \cite{Rawat}. The $\{0, 1\}$ membership matrix $\textbf{R}$ of order $m \times k$ of an LRC is constructed from the local groups of that code \cite{Rawat}. Rawat \textit{et al.} in \cite{Rawat} also originated a restricted class of $(n, k, r, t)$ LRCs, where there is exactly one parity symbol in each repair set. In the same article, the author placed an upper bound on the minimum distance of that class through the following inequality, that applies to both \emph{all symbol} and \emph{information Symbol} LRC.
\begin{equation}\label{LRC_eq2}
d \leq n-k - \left\lceil \frac{kt}{r}\right\rceil + t + 1.
\end{equation}
Design of the membership matrix is a key tool that helps to construct the above class of $(n, k, r, t)$ LRCs. There are several techniques for constructing the membership matrix for some locality and availability parameters in \cite{Su}. Due to the achievability of the bound (\ref{LRC_eq2}), this special class of $(n, k, r, t)$ LRCs is mostly studied. Some theoretical achievements can be found in \cite{Hao3}. Several constructions on optimal $(n, k, r, t)$ LRCs against the bound (\ref{LRC_eq2}) can be found in \cite{Hao4}, \cite{Hao3}, \cite{Kim}, \cite{Rawat}, \cite{Tan}, and \cite{J. wang} for different classes of parameters.

In support of $(n,k,r,t)_i$ LRCs without the restriction of the parity bits on the repair set, Wang and Zhang \cite{Wang} derived

\begin{equation}\label{LRC_eq3}
d \leq n-k - \left\lceil\frac{t(k-1) +1}{t(r-1) +1}\right\rceil + 2
\end{equation}

and established that, if $n\geq k(rt+1)$ then there are certain codes which satisfy (\ref{LRC_eq3}) with equality. 

In \cite{Tamo2016}, for $(n,k,r,t)_a$ LRCs, an upper bound on the code rate and minimum distance are developed, which are

\begin{equation}\label{LRC_eq4}
\frac{k}{n} \leq \frac{1}{\prod_{j=1}^t(1+\frac{1}{jr})},
\end{equation}

\begin{equation}\label{LRC_eq5}
d \leq n- \sum_{i=0}^{t} \left\lfloor \frac{k-1}{r^i} \right\rfloor.
\end{equation}

Codes satisfying bounds (\ref{LRC_eq3}) and (\ref{LRC_eq5}) with equality can be found in \cite{Wang}, \cite{Wang2}. Several new bounds are developed for codes with strict availability in \cite{Balaji}, which perform tighter than (\ref{LRC_eq4}). Prakash \textit{et al.} \cite{Prakash} presented an upper bound on the code rate of $(n, k, r, t)$ LRCs with sequential recovery for two erasures, which is

\begin{equation}\label{LRC_eq6}
\frac{k}{n} \leq \frac{r}{r+2}.
\end{equation}
Codes that attain the bound (\ref{LRC_eq6}) were constructed in \cite{Kadhe}. Recently, in some theoretical aspects, dual containing LRCs have been studied in \cite{Mukhopadhyay}. Xu \textit{et al.}, in \cite{Xu}, observed complete structures and constructions of optimal quaternary $(r, \delta)$ LRCs via parity check matrix. Parity check matrix is a valuable tool for studying linear codes. Thus, some analogous study via parity check matrix is necessary to understand various intrinsic algebraic and combinatorial properties of $(n, k, r, t)$ LRCs. 

This article focuses only on systematic $(n,k,r,t)_i$ LRCs, where each recovery set has precisely one parity \big(Single-Parity LRCs\big). Amid these codes, without loss of generality, we consider those systematic codes, where the first $k$ coordinates of a codeword indicate the data symbols. The rest of the paper is arranged in the following manner. After giving some background and the pertinent setting in Section II, we analyze some bounds and properties of the code through the standard parity check matrix in Section III. Section IV consists of several constructions of distance optimal and rate optimal codes. Further, Section V shows a property for a distance optimal $q$-ary $(n, k, r, t)_i$ LRCs. The paper concludes in Section VI.
\section{Notations and Preliminaries}

This section recalls some basic notations and preliminaries, which we use in our further discussions.

\begin{itemize}
    \item $\F_{p^m}$ denotes a finite field of size $p^m$, where $p$ is a prime and $m\in \N$.
    \item For a positive integer $m$, $[m]=\{1, 2,\dots, m\}$.
    \item The support set of a codeword $v=(v_1, v_2,\dots, v_m)$ is $Supp(v)=\{i\in [m] : v_i\neq 0\}$.
    \item $\boldsymbol{v}_g$ denotes $g$-tuple row vector whose all entries are $\boldsymbol{v}$.
    \item $\boldsymbol{v}^T_g$ is the transpose of $\boldsymbol{v}_g$.
    \item $X \otimes Y$ means the Kronecker product of two matrices $X$ and $Y$.
\end{itemize}
We now convey the formal definition of an $(n, k, r, t)_i$ LRCs. 
\begin{defn}\cite{Hao2, Rawat}\label{Definition:1}
A linear code is said to be $(n, k, r, t)_i$ LRC if it holds the following properties.
\begin{enumerate}
\item  For the $i$th code symbol, where $i \in [k]$, there exist $t$ codewords $c_1, \dots, c_t$ of $\C^{\perp}$ such that each $c_j$ covers the $i$th symbol.
\item $|Supp(c_j)| \leq r+1, \forall j \in [t]$.
\item $Supp(c_j) \cap Supp(c_l)  = \{i\}, \forall  j \neq l $ and $j,l \in [t] $.
\end{enumerate}
\end{defn}

The above definition conforms to any $(n, k, r, t)_i$ LRCs. However, the following property is also required for a single parity $(n, k, r, t)_i$ LRC.
\begin{enumerate}
    \item [(4)] $\lvert Supp(c_j) \setminus [k] \rvert = 1, \forall j \in [t]$. 
\end{enumerate}

The subsequent result sets up a characterization of an optimal single parity $(n, k, r, t)$ LRCs.
\begin{thm}\label{Theorem 0}\cite{Hao3}
For a single parity $(n, k, r, t)$ LRC $\C$, if \[d=n-k-\left\lceil\frac{kt}{r}\right\rceil+t+1,\]
then the number of columns in the membership matrix of $\C$ is $\left\lceil\frac{kt}{r}\right\rceil$. Moreover, if $r\mid kt$, then the membership matrix is a regular $k \times \frac{kt}{r}$ matrix with uniform row weight $t$ and column weight $r$.
\end{thm}
We now present a basic observation regarding the Cayley tables of a finite field, which will be helpful in our due course of events. Also, it is mentioned that we consider the elements in the same order in the operative rows and operative columns in both the Cayley tables.

\begin{lemma}\label{Lemma 1}
Let us consider the matrices $T^+$ and $T^\star$, obtained from the Cayley tables of $(\F_{p^m}, +)$  and $(\F_{p^m}, \star)$ \big(including 0\big) respectively by removing both the operative rows and columns. Now, if we choose to form Set I (or Set II) with $4$ elements from $T^+$ (or $T^\star$) in such a way that they are situated in the corners of a rectangle. Then Set I and Set II can not be the same. 
\end{lemma}

\begin{proof}
Let us consider that both Set I and Set II are same. Thus, there are elements $a_1, b_1, c_1, d_1$ and $a_2, b_2, c_2, d_2$ of $\F_{p^m}$ with $a_i \neq b_i$ and $c_i \neq d_i;$ for $i=\{1, 2\}$, such that the following two sets are equal.
\[\{a_1+c_1, b_1+c_1, a_1+d_1, b_1+d_1\}~~\text{and}~~ \{a_2\star c_2, b_2\star c_2, a_2\star d_2, b_2\star d_2\}\] Now, for any choice of equality of elements between these two sets, it leads to a contradiction.
\end{proof}

\section{Parity check matrix viewpoint of $(n, k, r, t)_i$ LRCs}

This section is divided into two subsections. First, we set up some characteristics on the standard parity check matrix for the code. Next, by using those properties of the parity check matrix, some new proofs for different upper and lower bounds are given.

\subsection{Structural features of parity check matrix}
\begin{prop}\label{Proposition 1}
Let $\C$ be a systematic $[n,k,d]$ linear code over a finite field $\F_{p^m}$ with parity check matrix $H := [P|I_{n-k}]$, then $\C$ forms an $(n, k, r, t)_i$ LRC if and only if there exists a submatrix $P_1$ of order $l \times k$ of $P$ that satisfies the following properties.
\begin{enumerate}
    \item For each $i\in [k]$, there exist at least $t$ rows $\Omega_1, \Omega_2, \hdots, \Omega_t$ of $P_1$ such that \[Supp(\Omega_i) \cap Supp(\Omega_j) = \{i\};~ \forall i \neq j \in [t].\]
    \item The weight of each row of $P_1$ is at most $r$.
\end{enumerate}
\end{prop} 

\begin{proof}
Let $\C$ be a systematic $(n, k, r, t)_i$ LRC with the parity check matrix $H$. Suppose the encoding function of the code $\C$ is
\[(m_1, \dots, m_k) \mapsto (m_1, \dots, m_k, p_1, \dots, p_{n-k}),\]
where $m_1, \dots, m_k$ are information bits and $p_1, \dots, p_{n-k}$ are parities, each of which is a linear combination of some $m_j$'s. Let $\textbf{x} =(m_1, \dots, m_k, p_1, \dots, p_{n-k})$ be a codeword in $\C$. Now, for the $i$th information symbol $m_i$ of $\textbf{x}$, there are at least $t$ disjoint repair sets $\Gamma_1(i), \dots, \Gamma_t(i)$ of cardinality at most $r$ and each of them contains exactly one parity. Thus, to ensure $t$ availability of $m_i$, $m_i$ involves in at least $t$ parities. Among these $t$ parities, each $p_i$ is a linear combination of at most $r$ information bits including $m_i$ as $m_i$ has locality  $r$. Moreover, as all the $t$ repair sets of $m_i$ are disjoint, these $t$ parities are formed by distinct sets of information bits excluding $m_i$. Similarly, by considering all information symbols, let $l$ be the number of such distinct parities. Consequently, there are $l$ rows of $P$ in $H$ those who forms $P_1$ and due to the above combination of information bits in the parities, each row of $P_1$ has weight at most $r$ and for each information coordinate $m_i$, there are at least $t$ rows of $P_1$ such that for any two distinct rows, their support sets intersect in the $i$th position.\\
Conversely, a parity check matrix $H$ is given for a linear code with a submatrix $P_1$ that satisfies the given two properties. Now, the rows of $P_1$, including the $I_{n-k}$ part in $H$ produces $l$ codewords $c_1, c_2, \hdots, c_l$ of $\C ^\perp$, and hence it can be seen that the two properties of $P_1$ help the code $\C$ to match those three properties of $(n, k, r, t)_i$ LRC, which is given in Definition \ref{Definition:1}. Furthermore, these $l$ codewords of $\C^\perp$ also satisfy the single parity property $\lvert Supp(c_j) \setminus [k] \rvert = 1, \forall j \in [l].$
\end{proof}

The above $P_1$ matrix plays a crucial role in the characterization of the parity check matrix for an $(n, k, r, t)_i$ LRC. In this regard, we start with the following result which shows a connection between the local groups (subsets) and the rows of the $P_1$ matrix. 

\begin{prop}\label{Proposition 2}
For each of the $m$ distinct local groups (subsets) $S_1, S_2, \dots, S_m$ of the form $\Gamma_j(i) \cup \{i\}$, there is a row in $P_1$ whose support set is equal to $S_i \cap [k]$ for some $i \in [m]$ and vice-versa.
\end{prop}

\begin{proof}
$\Gamma_j(i)$ contains the positions of at most $r-1$ information symbols (excluding the single parity bit) that help to construct the $i$th symbol. Thus, the corresponding parity must be a combination of those $r-1$ information symbols as well as the $i$th information symbol. Hence we can say that the rows of $P_1$ are non zero in those $r$ positions indexed by the set $\left\{\Gamma_j(i) \cup \{i\}\right\} \cap [k]$. Similarly, when a codeword is multiplied by $H$, each row of $P_1$ produces a locality relation which helps to recover a code symbol and hence there is a local group consisting the support of the rows of $P_1$.
\end{proof}

\begin{cor}\label{Corollary 1}
By choosing a proper order of the local groups (subsets), for a binary $(n, k, r, t)_i$ LRC, the transpose of $P_1$ is equal to the membership matrix $\textbf{R}$.
\end{cor}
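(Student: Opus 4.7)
The plan is to read off both matrices row by row and show they coincide once the local groups are indexed in a compatible order. Recall from Rawat \textit{et al.}\ \cite{Rawat} that the membership matrix $R$ is the $k \times m$ binary matrix whose $(i,j)$ entry is $1$ iff the $i$th information coordinate belongs to the $j$th local group $S_j$; equivalently, the $j$th column of $R$ is the indicator vector of $S_j \cap [k]$ viewed as a subset of $[k]$.

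By the previous theorem, each of the $l$ rows of $P_1$ has support equal to $S_j \cap [k]$ for exactly one of the $m$ local groups, and by \corref{Corollary:1} we have $l = m$, so this assignment is a bijection between rows of $P_1$ and local groups. The binary hypothesis now does the real work: every nonzero entry of $P_1$ must equal $1$, so each row of $P_1$ is identically the indicator row vector of the corresponding $S_j \cap [k]$. Without the binary assumption the entries would only be nonzero scalars in $\F_q$ and the coordinate-wise equality would fail.

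To finish, I would permute the local groups so that the $j$th group corresponds to the $j$th row of $P_1$; this is the \emph{suitable order} the statement refers to. Under this relabeling the $j$th row of $P_1$ equals the $j$th column of $R$ entry by entry, giving $(P_1)_{j,i} = R_{i,j}$, i.e.\ $P_1^{T} = R$.

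The main obstacle is really just bookkeeping rather than mathematics: one has to argue that the ordering of local groups used to build $R$ is a free choice (otherwise the equality would hold only up to a column permutation of $R$), and that each local group contributes exactly one row of $P_1$, which is precisely the content of \corref{Corollary:1}. Everything else follows from recognising that, in characteristic $2$, supports and rows are the same data.
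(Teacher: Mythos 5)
Your proof is correct and is exactly the argument the paper leaves implicit: the corollary is stated there without proof, as an immediate consequence of the preceding theorem (each row of $P_1$ has support $S_j\cap[k]$ and vice versa) together with $l=m$, the binary hypothesis turning support sets into literal $0$--$1$ rows, and the free choice of ordering of the local groups absorbing the column permutation. Nothing is missing.
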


\begin{cor}\label{Corollary 2}
The number of rows in the $P_1$ matrix equals the number of local groups (subsets) of $\C$, that is, $l=m$. Again from \cite[Lemma 1]{Rawat}, we know that $m \geq \left\lceil\frac{kt}{r}\right\rceil$. Hence, $l \geq \left\lceil\frac{kt}{r}\right\rceil$.
\end{cor}

The following result attributes another necessary and sufficient characterization of the parity check matrix of an $(n, k, r, t)_i$ LRC under some parametric aspect. However, the sufficient part can be found in \cite{Hao3} in a different perspective.

\begin{thm}\label{Theorem 1}
Let $\C$ be a systematic $(n, k, r, t)_i$ LRC with $l <2t$ or $kt + 2 > lr$. Then, $H\equiv\left[\dfrac{P_1}{P_2}\vline I_{n-k} \right]$ becomes a parity check matrix of $\C$ if and only if the following properties holds.
\begin{enumerate}
    \item The weight of each row of $P_1$ is at most $r$.
    \item Each column of $P_1$ has weight at least $t$.
    \item $\lvert Supp(\Omega_i) \cap Supp(\Omega_j)\rvert \leq 1$, where $\Omega_i$ and $\Omega_j$ are any two distinct rows of $P_1$.
\end{enumerate}
\end{thm}

\begin{proof}
Let $\C$ be a systematic $(n, k, r, t)_i$ LRC with $l <2t$ or $kt + 2 > lr$ and let the parity check matrix of $\C$ is $H$. Now, using the similar argument of Proposition \ref{Proposition 1}, the first two properties can be easily proved. For the later part, suppose $\Omega_i$ and $\Omega_j$ are the $i$th and $j$th rows of $P_1$ such that $Supp(\Omega_i) \cap Supp(\Omega_j)$ has at least two points. Choose any two of them, say $x$ and $y$. Thus, the $P_1$ matrix contains $4$ non zero entries in $(i, x),$ $(i, y),$ $(j, x)$ and $(j, y)$ positions. Now, as $x, y \in [k]$, there are at least $t$ rows of $P_1$ such that for any two distinct rows, their support sets only intersect in the $x$th position. Likewise, for $y$, there are also at least $t$ rows of $P_1$ such that for any two distinct rows, their support sets only intersect in the $y$th position. To satisfy this, $x$th and $y$th columns of $P_1$ must have weight at least $t+1$. Moreover, there must be at least $t-1$ non zero positions of the $x$th and the $y$th columns that are different. Accordingly, there should be at least $2t$ rows in $P_1$, which is a contradiction since $l < 2t$. Again, if the $x$th and $y$th columns of $P_1$ have weight at least $t+1$, then it must satisfies $kt + 2 \leq lr$, which again leads to a contradiction. Hence, the result holds.\\
The converse part can be proved by using the similar argument of Proposition \ref{Proposition 1}.
\end{proof}

The parametric conditions cannot be dropped to prove the necessary part of the above result. We furnish an example to support our statement.

\begin{example}
Consider the following matrix $A$ 
\[
A=
\begin{bmatrix}
1 & 1 & 1 & 0 & 1 & 0 & 0 & 0 & 0 & 0 & 0\\
1 & 1 & 0 & 1 & 0 & 1 & 0 & 0 & 0 & 0 & 0\\
1 & 0 & 0 & 0 & 0 & 0 & 1 & 0 & 1 & 1 & 0\\
0 & 1 & 0 & 0 & 0 & 0 & 1 & 1 & 0 & 0 & 1\\
0 & 0 & 1 & 1 & 0 & 0 & 0 & 1 & 1 & 0 & 0\\
0 & 0 & 0 & 0 & 1 & 1 & 0 & 0 & 0 & 1 & 1\\
\end{bmatrix}.
\]
Now, $[A|I_6]$ forms a parity check matrix of a systematic $(17, 11, 4, 2)_i$ binary LRC with the following configuration of the repair sets for each information symbol.

\begin{linenomath}
\begin{align*}
\Gamma_1(1) & =\{2, 3, 5, 12\}, & \Gamma_2(1) & =\{7, 9, 10, 14\};\\
\Gamma_1(2) & =\{1, 4, 6, 13\}, & \Gamma_2(2) & =\{7, 8, 11, 15\};\\
\Gamma_1(3) & =\{1, 2, 5, 12\}, & \Gamma_2(3) & =\{4, 8, 9, 16\};\\
\Gamma_1(4) & =\{1, 2, 6, 13\}, & \Gamma_2(4) & =\{3, 8, 9, 16\};\\
\Gamma_1(5) & =\{1, 2, 3, 12\}, & \Gamma_2(5) & =\{6, 10, 11, 17\};\\
\Gamma_1(6) & =\{1, 2, 4, 13\}, & \Gamma_2(6) & =\{5, 10, 11, 17\};\\
\Gamma_1(7) & =\{1, 9, 10, 14\}, & \Gamma_2(7) & =\{2, 8, 11, 15\};\\
\Gamma_1(8) & =\{2, 7, 11, 15\}, & \Gamma_2(8) & =\{3, 4, 9, 16\};\\
\Gamma_1(9) & =\{1, 7, 10, 14\}, & \Gamma_2(9) & =\{3, 4, 8, 16\};\\
\Gamma_1(10) & =\{1, 7, 9, 14\}, & \Gamma_2(10) & =\{5, 6, 11, 17\};\\
\Gamma_1(11) & =\{2, 7, 8, 15\}, & \Gamma_2(11) & =\{5, 6, 10, 17\}.
\end{align*}
\end{linenomath}
Here $l=6= \left\lceil\frac{kt}{r}\right\rceil$ and it doesn't satisfy either $l <2t$ or $kt + 2 > lr$. Now, it can be easily observed that the support sets of the $1$st and $2$nd row of $A$ intersect in $2$ positions.
\end{example}

The above example also justifies that mere equality of $l$ with $\left\lceil\frac{kt}{r}\right\rceil$ cannot achieve the result. To get the necessary part, we need an additional divisibility condition with $l = \left\lceil\frac{kt}{r}\right\rceil$. We now discuss it more in detail and record it as the corollary of Theorem \ref{Theorem 1}.
\begin{cor}\label{Corollary 3}
Let $\C$ be a systematic $(n , k, r, t)_i$ LRC with $l= \frac{kt}{r}$ and $r\mid kt$. Now, $H\equiv\left[\dfrac{P_1}{P_2}\vline I_{n-k} \right]$ becomes a parity check matrix of $\C$ if and only if the following properties holds.
\begin{enumerate}
    \item Each row weight of $P_1$ is exactly $r$.
    \item The weight of each column of $P_1$ is exactly $t$.
    \item $\lvert Supp(\Omega_i) \cap Supp(\Omega_j)\rvert \leq 1$, where $\Omega_i$ and $\Omega_j$ are any two distinct rows of $P_1$.
\end{enumerate}
\end{cor}

\begin{proof}
The proof directly follows from Theorem \ref{Theorem 0} and Theorem \ref{Theorem 1}
\end{proof}

For an $(n, k, r, t)_i$ LRC $\C$, the condition $l=\frac{kt}{r}$ with $r \mid kt$ also provides the following result, which concerns the size of the information bits in $\C$.

\begin{prop}
For an $(n, k, r, t)_i$ LRC, if $l=\frac{kt}{r}$ with $r\mid kt$, then \[k \geq t(r-1) +1.\]
\end{prop}
\begin{proof}
From Corollary \ref{Corollary 3}, we get that if $l=\frac{kt}{r}$ with $r\mid kt$, then each row and column weight of $P_1$ is exactly $r$ and $t$ respectively. Also, this implies that each information symbol has exactly $t$ repair sets containing exactly $r$ other symbols. Thus, there must be at least $t(r-1)$ distinct information symbols to construct a repair set for an information symbol since each repair set has exactly one parity. Therefore, $k\geq t(r-1)+1$ and the proof follows.
\end{proof}

Using the above results, we now propose the structural form of the standard parity check matrix of an optimal systematic $(n, k, r, t)_i$ LRC.

\begin{thm}\label{Theorem 2}
For an optimal $(n , k, r, t)_i$ LRC $\C$ with $r\mid kt$, the parity check matrix $H\equiv\left[\dfrac{P_1}{P_2}\vline I_{n-k} \right]$ of $\C$ satisfies the following three properties.
\begin{enumerate}
    \item The weight of each row of $P_1$ is exactly $r$.
    \item Each column weight of $P_1$ is exactly $t$.
    \item $\lvert Supp(\Omega_i) \cap Supp(\Omega_j)\rvert \leq 1$, where $\Omega_i$ and $\Omega_j$ are any two distinct rows of $P_1$.
\end{enumerate}
\end{thm}

\begin{proof}
Suppose $\C$ is an optimal $(n, k, r, t)_i$ LRC with $r\mid kt$. Now, from Theorem \ref{Theorem 0}, we can say $l=\frac{kt}{r}$ and hence from Corollary \ref{Corollary 3}, the result holds.
\end{proof}

\subsection{Review of several bounds using parity check matrix}

As said earlier, we now propose some purely parity check matrix based proofs for several bounds of the code. We begin our study with the following well known bound.

\begin{thm}
For an $(n, k, r, t)_i$ LRC $\C$, the minimum distance (Hamming) $d$ satisfies \[d \leq n-k - \left\lceil\frac{kt}{r}\right\rceil + t +1.\]
\end{thm}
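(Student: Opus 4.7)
Plan. The inequality is two-sided, so I would prove the lower and upper bounds separately, both using the structural description of the parity check matrix established in \thmref{Theorem 1}.

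For the lower bound $d \geq t+1$, I argue by contradiction. Suppose $c \in \C$ is nonzero with $\textup{wt}(c) \leq t$. Since the code is systematic, $c$ has at least one information coordinate $i$ with $c_i \neq 0$. By \defref{Definition:1} there exist dual codewords $c_1,\ldots,c_t \in \C^{\perp}$, each containing $i$ in its support, whose supports pairwise meet only in $\{i\}$. The $t$ sets $\textup{Supp}(c_j)\setminus\{i\}$ are therefore pairwise disjoint while $|\textup{Supp}(c)\setminus\{i\}| \leq t-1$, so by pigeonhole some $c_{j^{*}}$ satisfies $\textup{Supp}(c_{j^{*}}) \cap \textup{Supp}(c) \subseteq \{i\}$. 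The orthogonality $\langle c,c_{j^{*}}\rangle = 0$ then collapses to $c_i (c_{j^{*}})_i = 0$, and since $(c_{j^{*}})_i \neq 0$ we obtain $c_i = 0$, contradicting $i \in \textup{Supp}(c)$.

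For the upper bound, set $l^{*} = \lceil kt/r \rceil$ and let $l$ be the number of rows of $P_1$, so that by \corref{Corollary:1} we have $l \geq l^{*}$. Puncturing $\C$ at the $n-k-l$ parity coordinates outside the $I_l$ block of $H$ yields a code $\C'$ of length $k+l$, dimension $k$, and parity check matrix $[P_1 \mid I_l]$. The standard puncturing inequality gives $d(\C) \leq d(\C') + (n-k-l)$, so it suffices to prove $d(\C') \leq l - l^{*} + t + 1$. For this I would look for codewords of $\C'$ of the form $(m,-P_1 m)$. The choice $m = \alpha e_i$ gives weight $1 + \textup{wt}(\textup{col}_i(P_1))$. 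The double-count $\sum_i \textup{wt}(\textup{col}_i(P_1)) = \sum_j \textup{wt}(\textup{row}_j(P_1)) \leq lr$, combined with the column lower bound $\textup{wt}(\textup{col}_i(P_1)) \geq t$ from \thmref{Theorem 1}(ii), can be pushed to yield a column of $P_1$ of weight at most $l - l^{*} + t$, producing the required codeword and hence $d(\C) \leq n-k-\lceil kt/r\rceil + t + 1$.

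The delicate step is the weight estimate for $\C'$. In balanced situations (for instance, $l = l^{*}$ with $kt/r$ integral), the double-count forces every column of $P_1$ to have weight exactly $t$, so a single-basis-vector $m$ already produces a codeword of weight precisely $t+1$. In unbalanced situations one must work harder: $m$ may have to be supported on several coordinates, and the pairwise intersection condition \thmref{Theorem 1}(iii), $|\Omega_i \cap \Omega_j| \leq 1$, is used to bound the number of nontrivial cancellations in $P_1 m$ and thus keep the overall weight within $l - l^{*} + t + 1$. Navigating this interplay between the three structural conditions of \thmref{Theorem 1} is the main technical work behind the upper bound.
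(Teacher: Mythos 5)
Your lower bound is correct but takes a genuinely different route from the paper. The paper proves $d\geq t+1$ by showing that any $t$ columns of $H$ are linearly independent: each column of $P_1$ has weight at least $t$, any two column supports meet in at most one position, so a nonzero combination of $t$ columns of $P_1$ has weight at least $t\cdot t-2\binom{t}{2}=t$, and mixing in columns of $I_{n-k}$ is handled the same way. Your dual-codeword pigeonhole argument (a weight-$\leq t$ codeword cannot meet all $t$ pairwise-almost-disjoint repair codewords of a nonzero information coordinate, so orthogonality forces that coordinate to vanish) is an equally valid and arguably more transparent alternative that works directly at the codeword level and uses only Definition~\ref{Definition:1}; the paper's version buys the slightly stronger statement that every $t$-subset of columns of $H$ is independent.

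The upper bound, as written, has a genuine gap: the entire argument hinges on producing a column of $P_1$ of weight at most $l-\left\lceil\frac{kt}{r}\right\rceil+t$, and you never establish this --- you say the double-count ``can be pushed'' and then defer the ``delicate step'' to unspecified technical work, including a cancellation analysis for multi-coordinate $m$ that is never needed. Two ways to close it. (i) The paper's way: since the availability is exactly $t$, some information symbol has exactly $t$ repair sets, hence some column of $P_1$ has weight exactly $t$, and $t\leq l-\left\lceil\frac{kt}{r}\right\rceil+t$ precisely because $l\geq\left\lceil\frac{kt}{r}\right\rceil$ (\corref{Corollary:1}); the paper does not even puncture, but notes that this column of $\left[\frac{P_1}{P_2}\right]$ has at most $n-k-l+t$ nonzeros and, together with the identity columns at those positions, gives $n-k-l+t+1$ dependent columns of $H$, so $d\leq n-k-l+t+1$. (ii) Completing your averaging: $\min_i \mathrm{wt}(\mathrm{col}_i(P_1))\leq\left\lfloor \frac{lr}{k}\right\rfloor\leq\left\lfloor l+t-\frac{kt}{r}\right\rfloor=l+t-\left\lceil\frac{kt}{r}\right\rceil$ whenever $r\leq k$ (using $lr\geq kt$), after which a single basis vector $m=\alpha e_i$ already yields the required low-weight codeword of $\C'$ and no cancellation analysis is necessary. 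Either repair is short, but as submitted the upper-bound half is a plan rather than a proof.
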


\begin{proof}
Let $\C$ be an $(n, k, r, t)_i$ LRC with parity check matrix $H\equiv\left[\dfrac{P_1}{P_2}\vline I_{n-k} \right]$. Now, as the availability of the code is $t$, thus there is an information bit which has exactly $t$ repair sets and hence there is a column of $P_1$ whose weight is exactly $t$. Let it be the $j$th column. Therefore, in $P_1$, the $j$th column has exactly $l-t$ zeros. Accordingly the $j$th column has at most $n-k-l+t$ nonzeros in  $\dfrac{P_1}{P_2}$. Thus, due to $I_{n-k}$, to achieve a linearly dependent set of columns, we have to add at most $n-k-l+t$ columns with the $j$th column. Hence, the minimum distance of $\C$ cannot exceed $n-k-l+t +1$. Consequently, \[d \leq n-k-l+t+1 \leq n-k- \left \lceil \frac{kt}{r}\right\rceil + t + 1.\] The second inequality holds due to Corollary \ref{Corollary 2}. This completes the proof.
\end{proof}
The next inequality describes a lower bound on the length of an $(n, k, r, t)_i$ LRC.

\begin{thm}
For any $(n, k, r, t)_i$ LRC,
\[n \geq k+\left\lceil\frac{kt}{r}\right\rceil.\]
\end{thm}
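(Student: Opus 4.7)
The plan is to read the bound directly off the structure of the parity check matrix furnished by Theorem~\ref{Theorem 1}. Writing $H \equiv \left[\dfrac{P_1}{P_2}\,\vline\, I_{n-k}\right]$ for the systematic parity check matrix of the $(n,k,r,t)$ IS-LRC, the submatrix $P_1$ has some number $l$ of rows, and since $P_1$ is built by extracting rows from the $(n-k)\times k$ matrix $P$, we immediately get the inequality $l \leq n-k$. This is the only ``geometric'' fact I need about how $P_1$ sits inside $H$.

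Next I would invoke Corollary~\ref{Corollary:1}, which equates $l$ with the number $m$ of distinct local groups and records the inequality $m \geq \lceil kt/r \rceil$ (quoted there from \cite[Lemma 1]{Rawat}). Chaining these two observations gives
\begin{equation*}
n - k \;\geq\; l \;=\; m \;\geq\; \left\lceil \frac{kt}{r} \right\rceil,
\end{equation*}
which rearranges to the desired bound $n \geq k + \lceil kt/r \rceil$.

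There is essentially no obstacle beyond making the row-count bookkeeping explicit: one must justify that $P_1$ cannot have more rows than $P$ itself, which is immediate from its construction in Theorem~\ref{Theorem 1}. The substantive content, namely that the number of local groups is at least $\lceil kt/r\rceil$, is already packaged in Corollary~\ref{Corollary:1}, so the proof reduces to stitching these two facts together in one line.
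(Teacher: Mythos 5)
Your argument is exactly the paper's: the rows of $P_1$ are drawn from the $n-k$ rows of $P$, so $n-k \geq l$, and combining this with $l \geq \left\lceil\frac{kt}{r}\right\rceil$ from Corollary~\ref{Corollary:1} gives the bound. The paper states this in one line; you have simply made the same bookkeeping explicit, and the proof is correct.
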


\begin{proof}
By using the fact that the rows of $P_1$ must lie in $H$ and since $l \geq \left\lceil\frac{kt}{r}\right\rceil$, so accordingly it holds.
\end{proof}
The above result immediately facilitates the following corollary associated with the code rate of an $(n, k, r, t)_i$ LRC.
\begin{cor}
The code rate of an $(n, k, r, t)_i$ LRC is upper bounded by
\[ \frac{k}{n} \leq \left\{
\begin{array}{ll}
\dfrac{k}{k+\left\lceil\frac{kt}{r}\right\rceil}, & \text{ if~~} r\nmid kt;\\
\dfrac{r}{r+t}, & \text{ if~~} r \mid kt.
\end{array}
\right.\]
\end{cor}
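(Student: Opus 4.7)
The plan is to derive this corollary as an immediate consequence of the preceding theorem, which gives the lower bound $n \geq k + \left\lceil\tfrac{kt}{r}\right\rceil$. Since $k$ is positive, I can invert both sides and multiply by $k$ to obtain
\[
\frac{k}{n} \;\leq\; \frac{k}{k + \left\lceil\tfrac{kt}{r}\right\rceil},
\]
which already handles the generic case and immediately yields the first branch of the piecewise bound whenever $r \nmid kt$.

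For the second branch, I would split into the case $r \mid kt$. Here the ceiling disappears, so $\left\lceil\tfrac{kt}{r}\right\rceil = \tfrac{kt}{r}$, and a short algebraic simplification gives
\[
\frac{k}{k + \tfrac{kt}{r}} \;=\; \frac{kr}{kr + kt} \;=\; \frac{r}{r+t},
\]
which is the second branch. So the only real content of the proof is the bound from the previous theorem together with this elementary manipulation of the ceiling.

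I do not anticipate any real obstacle; the main thing to be careful about is simply to record the two cases cleanly and to justify that the rate is maximized precisely when $n$ achieves its lower bound. Nothing beyond the previously established $n \geq k + \left\lceil\tfrac{kt}{r}\right\rceil$ is needed, so the proof should fit in a few lines.
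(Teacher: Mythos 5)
Your proposal is correct and matches the paper's intent exactly: the corollary is stated as an immediate consequence of the preceding theorem's bound $n \geq k+\left\lceil\frac{kt}{r}\right\rceil$, and your inversion plus the simplification of the ceiling when $r \mid kt$ is precisely the argument the authors leave implicit. Nothing further is needed.
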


\begin{remark}
Note that the code rate is comparatively higher when $r\mid kt$. Furthermore, an $(k+\frac{kt}{r}, k, r, t)_i$ LRC, where $r\mid kt$, is always distance optimal and has the highest code rate among all possible values of the parameters of $(n, k, r, t)_i$ LRCs.
\end{remark}


\section{Construction of binary optimal $(n, k, r, t)_i$ LRC}

This section presents several constructions of optimal $(n,k,r,t)_i$ LRCs. These constructions mainly depend on the formation of the parity check matrix $H$ with the help of two different Cayley tables of a finite field.

\subsection{Construction 1}\label{Construction 1} Let us consider the Cayley table of $({\F_{p^m}}, +)$. After removing the operative row and the operative column from the table, we can assume it to be a square matrix $T^+ = (a_{ij})$ of order $p^m$. Using this matrix $T^+$, we now build a square matrix $M_{c} = (b_{ij})$ of order $p^m$ for each $c \in {\F_{p^m}}$ in the following way 
\[
M_c = \left\{
\begin{array}{ll}
b_{ij} = 1, &\text{if}~~a_{ij} = c~~\text{in}~~T^+,\\
b_{ij} = 0, &\text{otherwise}.
\end{array}
\right.
\]
Likewise by considering the Cayley table of $({\F_{p^m}} , \star)$, we can again imagine it as a square matrix $T^{\star}$ of order  $p^m$, as we include $0$ in the Cayley table. Now, directly place the matrix $M_{c}$ where there is a $c$ in $T^{\star} $. It forms a square matrix $A$ of order $p^{2m}$.

\begin{prop}\label{Proposition 3}
The above constructed matrix $A$ satisfies the following two properties.
\begin{enumerate}
    \item Each row weight and column weight of $A$ is exactly $p^m$.
    \item $\lvert Supp(\Omega_i) \cap Supp(\Omega_j)\rvert \leq 1$, where $\Omega_i$ and $\Omega_j$ are any two distinct rows of $A$.
\end{enumerate}
\end{prop}

\begin{proof}
In the Cayley table of $({\F_{p^m}}, +)$, each element of $\F_{p^m}$ occurs exactly once in each row and each column. Thus, row and column weight of $M_c$ are exactly one for all $c\in \F_{p^m}$. Using this fact, the first property can be easily proved. Again from Lemma \ref{Lemma 1}, the support sets of any two distinct rows cannot intersect in two or more points. Hence the result follows.
\end{proof}
\begin{remark}
If $H=[A|I_{p^{2m}}]$, then $H$ forms a parity check matrix of a binary optimal $(2p^{2m}, p^{2m}, p^m, p^m)_i$ LRC with distance $p^m+1$ and rate $\frac{1}{2}$.
\end{remark}

\begin{example}\label{Example 1}
Consider the field  $\F_5$ with $5$ elements and its two Cayley tables, which are the following,

\begin{table}[H]
     \begin{minipage}{.3\textwidth}
     $   \begin{array}{l|*{5}{l}}
       + & 0 & 1 & 2 & 3 & 4 \\ \hline
        0 & 0 & 1 & 2 & 3 & 4 \\
        1 & 1 & 2 & 3 & 4 & 0 \\ 
        2 & 2 & 3 & 4 & 0 & 1 \\ 
        3 & 3 & 4 & 0 & 1 & 2 \\ 
        4 & 4 & 0 & 1 & 2 & 3 \\
         \end{array} $\vspace{3mm}
        \caption{}
        \label{Table 1}
        \end{minipage}
        \begin{minipage}{.3\textwidth}
        $   \begin{array}{l|*{5}{l}}
      \star & 0 & 1 & 2 & 3 & 4 \\ \hline
              0 & 0 & 0 & 0 & 0 & 0 \\
              1 & 0 & 1 & 2 & 3 & 4 \\ 
              2 & 0 & 2 & 4 & 1 & 3 \\ 
              3 & 0 & 3 & 1 & 4 & 2 \\ 
              4 & 0 & 4 & 3 & 2 & 1 \\
         \end{array} $\vspace{3mm}
        \caption{}
        \label{Table 2}
        \end{minipage}
\end{table}
From Table \ref{Table 1} and Table \ref{Table 2}, we can obtain the corresponding $T^+$ and $T^{\star}$ matrices which are given by
\[
T^+=
\begin{bmatrix}
       0 & 1 & 2 & 3 & 4 \\
       1 & 2 & 3 & 4 & 0 \\ 
       2 & 3 & 4 & 0 & 1 \\ 
       3 & 4 & 0 & 1 & 2 \\ 
       4 & 0 & 1 & 2 & 3 \\
\end{bmatrix},\quad
T^{\star}=
\begin{bmatrix}
       0 & 0 & 0 & 0 & 0 \\
       0 & 1 & 2 & 3 & 4 \\ 
       0 & 2 & 4 & 1 & 3 \\ 
       0 & 3 & 1 & 4 & 2 \\ 
       0 & 4 & 3 & 2 & 1 \\
\end{bmatrix}.
\]
Using the matrix $T^+$, we now build the following matrices for each element of $\F_5$.
\[
M_0=
\begin{bmatrix}
1 & 0 & 0 & 0 & 0 \\
0 & 0 & 0 & 0 & 1 \\
0 & 0 & 0 & 1 & 0 \\
0 & 0 & 1 & 0 & 0 \\
0 & 1 & 0 & 0 & 0 \\
\end{bmatrix},\quad
M_1=
\begin{bmatrix}
0 & 1 & 0 & 0 & 0 \\
1 & 0 & 0 & 0 & 0 \\
0 & 0 & 0 & 0 & 1 \\
0 & 0 & 0 & 1 & 0 \\
0 & 0 & 1 & 0 & 0 \\
\end{bmatrix},\quad
M_2=
\begin{bmatrix}
0 & 0 & 1 & 0 & 0 \\
0 & 1 & 0 & 0 & 0 \\
1 & 0 & 0 & 0 & 0 \\
0 & 0 & 0 & 0 & 1 \\
0 & 0 & 0 & 1 & 0 \\
\end{bmatrix},
\]

\[
M_3=
\begin{bmatrix}
0 & 0 & 0 & 1 & 0 \\
0 & 0 & 1 & 0 & 0 \\
0 & 1 & 0 & 0 & 0 \\
1 & 0 & 0 & 0 & 0 \\
0 & 0 & 0 & 0 & 1 \\
\end{bmatrix},\quad
M_4=
\begin{bmatrix}
0 & 0 & 0 & 0 & 1 \\
0 & 0 & 0 & 1 & 0 \\
0 & 0 & 1 & 0 & 0 \\
0 & 1 & 0 & 0 & 0 \\
1 & 0 & 0 & 0 & 0 \\
\end{bmatrix}.
\]
We then place $M_0, M_1, \dots, M_4$ accordingly using $T^{\star}$, which gives the following matrix $A$ of order $25\times 25$
\[
A=
\begin{bmatrix}
M_0 & M_0 & M_0 & M_0 & M_0 \\
M_0 & M_1 & M_2 & M_3 & M_4 \\
M_0 & M_2 & M_4 & M_1 & M_3 \\
M_0 & M_3 & M_1 & M_4 & M_2 \\
M_0 & M_4 & M_3 & M_2 & M_1 \\
\end{bmatrix}.
\]
Now, if we arrange $H=[A|I_{25}]$, then $H$ forms a parity check matrix of a binary optimal $(50, 25, 5, 5)_i$ LRC.
\end{example}
\subsection{Construction 2}\label{Construction 2} First, we take the matrix $A$ from Construction \ref{Construction 1}. By using the matrix $A$, we build a square matrix $A'$ of order $(p^{2m}+p^m+1)$ in the following manner,
\[A'=\left[
\begin{array}{cc}
\multicolumn{1}{c|}{\textbf{1}_{p^m + 1}}    & \multicolumn{1}{c}{\textbf{0}_{p^{2m}}} \\
\hline

 & \multicolumn{1}{|c}{ }\\
 I_{p^m+1}\otimes \textbf{1}_{p^m}^T
 & \multicolumn{1}{|c}{I_{p^m}\otimes\textbf{1}_{p^m}}\\
\cline{2-2}
 &\multicolumn{1}{|c}{A}
\end{array}\right],\]

\begin{prop}\label{Proposition 4}
The above constructed matrix $A'$ satisfies the following two properties.
\begin{enumerate}
    \item The weight of each row and column of $A'$ is exactly $p^m+1$.
    \item $\lvert Supp(\Omega_i) \cap Supp(\Omega_j)\rvert \leq 1$, where $\Omega_i$ and $\Omega_j$ are any two distinct rows of $A'$.
\end{enumerate}
\end{prop}

\begin{proof}
The proof follows from Proposition \ref{Proposition 3} and the properties of the Kronecker product.
\end{proof}

\begin{remark}
If $H = [A'|I_{p^{2m}+p^m+1}]$, then $H$ becomes a parity check matrix of a binary optimal $\big(2(p^{2m}+p^m+1), p^{2m}+p^m+1, p^m+1, p^m+1\big)_i$ LRC with distance $p^m+2$ and rate $\frac{1}{2}$.
\end{remark}

\begin{example}
First, consider the square matrix $A$ of order 25 of Example \ref{Example 1}. Accordingly, the required matrix $A'$ of order $31\times31$ is
\[A'=\left[
\begin{array}{cc}
\multicolumn{1}{c|}{\textbf{1}_{6}}     & \multicolumn{1}{c}{\textbf{0}_{25}} \\
\hline
 & \multicolumn{1}{|c}{ }\\
 I_{6}\otimes\textbf{1}_{5}^T & \multicolumn{1}{|c}{I_{5}\otimes\textbf{1}_{5}}\\
\cline{2-2}
 &\multicolumn{1}{|c}{A}
\end{array}\right].\]
Now, if we set $H=[A'|I_{31}]$, then $H$ forms a parity check matrix of a binary optimal $(62, 31, 6, 6)_i$ LRC.
\end{example}

The importance of the above two constructions and this parity restricted class of $(n, k, r, t)_i$ LRCs are illustrated in the succeeding remarks.

\begin{remark}
The above two optimal $(n, k, r, t)_i$ LRC can be constructed for any finite field $\F_{p^m}$, in that case, we just need to replace the $1$'s of $A$ and $A'$ by any nonzero element of $\F_{p^m}$. The said technique is feasible only for this class of $(n, k, r, t)_i$ LRCs as it contains only one parity symbol in the repair set.
\end{remark}

\begin{remark}
Beside these discussions, the said technique of constructing such matrices $A$ and $A'$ can be used to form the explicit incidence matrices of finite projective planes of order $p^m$ and $p^m+1$ respectively in an easier way. 
\end{remark}

\subsection{Construction 3} For an $(n , k, r, t)_i$ LRC, as $d \geq t+1$, the above two optimal parameters have trivial minimum distance. We now provide an approach for constructing binary LRCs having non trivial optimal distance. Some of these can be found in \cite{Hao4} for some particular values. However, we present our construction vividly in a more general setting using those two matrices $A$ and $A'$. In this context, we start with the following observation.

\begin{prop}\label{Proposition 5}
Let us consider the matrix $H$ in the following manner
\[H= \left[\dfrac{A}{\textbf{1}_{p^{2m}}}\vline I_{p^{2m} +1}\right],\]
where $A$ is the chosen matrix from Construction \ref{Construction 1}, then $H$ forms a parity check matrix of an optimal binary $(2p^{2m} + 1, p^{2m}, p^m, p^m)_i$ LRC with distance $p^m + 2$ for an even $p$.
\end{prop}

\begin{proof}
We show that any $p^m +1$ columns of $H$ are linearly independent. First of all, it can be easily proved that any $p^m$ columns of $A$ are linearly independent. Now, as $p$ is even, thus any $p^m +1$ columns of $\left[\dfrac{A}{\textbf{1}_{p^{2m}}}\right]$ are linearly independent. Again support sets of any two distinct rows of $A$ intersect in at most one position, so it implies that the support sets of any two columns of $A$ also intersect in at most one position. Moreover, the weight of any column of $A$ is exactly $p^m$. Let $w$ be the weight of any nonzero linear combination of any $s~(<p^m +1)$ columns of $\left[\dfrac{A}{\textbf{1}_{p^{2m}}}\right]$, then 
\[
w \geq \left\{
\begin{array}{ll}
s\cdot p^m-2\cdot{s \choose 2} +1 = s(p^m -s) +s +1, & \text{if~~} s ~~\text{is~~odd};\\\\
s\cdot p^m-2\cdot{s \choose 2}  = s(p^m -s) +s, & \text{if~~} s ~~\text{is~~even}.
\end{array}
\right.
\]
Now, if we choose $p^m +1$ columns of $H$, $s$ from $\left[\dfrac{A}{\textbf{1}_{p^{2m}}}\right]$ and the remaining $p^m +1 -s$ columns from $I_{p^{2m} +1}$, then these columns are linearly independent as $w > p^m +1 -s$, for any $s$. Again, there always exists $p^m + 2$ linearly dependent columns due to the identity matrix and the column weight of $\left[\dfrac{A}{\textbf{1}_{p^{2m}}}\right]$. Hence, the minimum distance of the constructed code is $p^m + 2$.  
\end{proof}

The next result is obtained for an odd prime $p$. We proceed in a similar approach like above, and here the $A'$ matrix of Construction \ref{Construction 2} is being used. 

\begin{prop}\label{Proposition 6}
Let us consider the following matrix 
\[
H= \left[\dfrac{A'}{\textbf{1}_{p^{2m}+p^m+1}}\vline I_{p^{2m}+p^m+2}\right],\]
where $A'$ is the chosen matrix from Construction \ref{Construction 2}, then $H$ forms a parity check matrix of an optimal binary $\big(2(p^{2m} + p^m +1) +1, p^{2m} +p^m +1, p^m+1, p^m+1\big)_i$ LRC with distance $p^m + 3$ for an odd $p$.
\end{prop}

\begin{proof}
The proof is quite similar to Proposition \ref{Proposition 5}.
\end{proof}

\begin{remark}
The minimum distance in the above two constructions can be extended further by considering the similar pattern and using the rank of the matrices $A$ and $A'$. However, in that scenario, some other suitable rows are to be added in $H$.
\end{remark}
\section{property of $q$-ary optimal $(n, k, r, t)_i$ LRC}
This section proposes a generalized property of a $q$-ary optimal $(n, k, r, t)_i$ LRC. It can be observed that \cite[Theorem 1]{Hao} is a particular case of the following result with $t=1$.
\begin{thm}
Suppose $\C$ is an $(n, k, r, t)_i$ LRC having distance $d=n-k-\left\lceil\frac{kt}{r}\right\rceil+t+1$ and $H\equiv\left[\dfrac{P_1}{P_2}\vline I_{n-k} \right]$ is the standard parity check matrix of $\C$. Let $H_1$ and $H_2$  be two $m_1\times n_1$ and  $m_2\times n_2$ submatrices acquired from $H$ by deleting any fixed $\left\lceil\frac{kt}{r}\right\rceil-t$ and $\left\lceil\frac{kt}{r}\right\rceil-t-1$ rows among the first $l$ rows and all the columns whose coordinates are covered by the supports of those $\left\lceil\frac{kt}{r}\right\rceil-t$ and  $\left\lceil\frac{kt}{r}\right\rceil-t-1$ rows respectively. Then $H_1$ and $H_2$ have full row rank with $m_1=d-1$ and $m_2=d$. Moreover, the linear codes $\C_1 \equiv[n_1, k_1, d_1]$ and $\C_2 \equiv[n_2, k_2, d_2]$ with, respectively, $H_1$ and $H_2$ as their parity check matrices, accordingly, become $q$-ary MDS code and almost MDS code.
\end{thm}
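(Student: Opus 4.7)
The plan is to verify the row counts, establish full row rank via the retained identity block, prove $d_j \geq d$ by an extension argument, and conclude via the Singleton bound.

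First I would observe that distance optimality forces $l = \lceil kt/r \rceil$, since the chain $d \leq n-k-l+t+1 \leq n-k-\lceil kt/r\rceil+t+1$ used to prove the earlier distance bound must be tight. Writing $R \subseteq [l]$ for the set of deleted rows and $C \subseteq [n]$ for the set of deleted columns, by construction $C \cap [k+1,n] = \{k+i : i \in R\}$. The surviving row counts are then $m_1 = (n-k) - (\lceil kt/r\rceil - t) = d-1$ and $m_2 = (n-k) - (\lceil kt/r\rceil - t - 1) = d$. Full row rank follows because, for each surviving $i \in [n-k]\setminus R$, the column $k+i$ of the identity block is retained in $H_j$ and isolates row $i$; these surviving identity columns form an $m_j \times m_j$ identity submatrix inside $H_j$.

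Next, to bound the distances from below, I would take any nonzero codeword $y$ of the code with parity check matrix $H_j$ and extend it to $\tilde y \in \F_q^n$ by inserting zeros at the coordinates of $C$. The key claim is $\tilde y \in \C$. Indeed, on rows $i \in [n-k]\setminus R$ we have $(H\tilde y)_i = (H_j y)_i = 0$ by hypothesis; on rows $i \in R$, every nonzero entry of row $i$ of $H$ already lies in $C$ (its $P_1$ support by the very definition of $C$, and its identity entry at column $k+i$), so $(H\tilde y)_i = 0$ because $\tilde y$ vanishes on $C$. Consequently $\mathrm{wt}(y) = \mathrm{wt}(\tilde y) \geq d$, giving $d_j \geq d$.

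Combining $d_j \geq d$ with the Singleton bound $d_j \leq n_j - k_j + 1 = m_j + 1$ gives $d_1 = d = n_1 - k_1 + 1$, so $\C_1$ is MDS; and $d_2 \in \{d, d+1\}$. To pin down $d_2 = d$ and conclude that $\C_2$ is almost MDS, I would exhibit a codeword of $\C$ of weight $d$ whose support avoids $C$: the single-information-coordinate codeword associated with a column $j^{\ast}$ of $P$ of weight exactly $d-1$ (the witness produced in the equality case of the Singleton-like bound) does the job, provided the $t$ rows of $P_1$ containing $j^{\ast}$ lie outside $R$. The main obstacle is this last combinatorial step; the preceding parts are purely algebraic, but this one uses the inequality $|R| = \lceil kt/r\rceil - t - 1 < \lceil kt/r\rceil - t$, together with the covering structure of the local groups, to guarantee that some admissible minimum-weight witness survives the deletion.
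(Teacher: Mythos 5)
Your row counts, the full-row-rank argument via the surviving identity columns, and the zero-extension argument giving $d_j\geq d$ all match the paper's proof (the paper phrases the last point in terms of columns of $H_j$ remaining linearly dependent sets of columns of $H$ because the deleted rows vanish on the retained columns, but this is the same argument). Combined with Singleton, $d_1=d=m_1+1$ and hence $\C_1$ is MDS, exactly as in the paper.

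The genuine gap is the final step, $d_2=d$, which you correctly identify as the obstacle but then attack the hard way. Your plan to exhibit an explicit weight-$d$ codeword of $\C$ supported off the deleted columns requires a combinatorial argument you have not supplied: you would need a column $j^{\ast}$ of $P$ of weight exactly $d-1$ whose entire support (both the $t$ rows of $P_1$ and the relevant rows of $P_2$) avoids the deleted row set, and nothing in the hypotheses obviously guarantees such a witness survives an arbitrary choice of $\left\lceil\frac{kt}{r}\right\rceil-t-1$ deleted rows. The paper closes this step with a one-line observation you already have the tools for: augment the $H_2$-deletion set by one more of the first $l$ rows to obtain a valid $H_1$-deletion set containing it; then every codeword of that $\C_1$, extended by zeros, is a codeword of $\C_2$ by precisely your own extension argument, so $d_2\leq d_1=d$, and together with $d_2\geq d$ this forces $d_2=d$ and Singleton defect $1$, i.e.\ $\C_2$ is almost MDS. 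In short, the inequality chain $d\leq d_2\leq d_1\leq m_1+1=d$ finishes the proof without constructing any minimum-weight witness.
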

\begin{proof}
First of all, it is easy to verify that 
\[m_1 = n-k-\left\lceil\frac{kt}{r}\right\rceil+t = d-1,\]and \[m_2 = n-k-\left\lceil\frac{kt}{r}\right\rceil+t+1=d.\]

Let $\gamma_1$ and  $\gamma_2$ be the number of columns covered by the omitted $\left\lceil\frac{kt}{r}\right\rceil-t$ and $\left\lceil\frac{kt}{r}\right\rceil-t-1$ rows successively. Then $\gamma_1\leq k+\left\lceil\frac{kt}{r}\right\rceil-t$ and $\gamma_2\leq k+\left\lceil\frac{kt}{r}\right\rceil-t-1$, as $H$ is in standard form. Thus, we have \[n_1\geq n-(k+\left\lceil\frac{kt}{r}\right\rceil-t)=n-k-\left\lceil\frac{kt}{r}\right\rceil+t= d-1 = m_1,\]

\[n_2\geq n-(k+\left\lceil\frac{kt}{r}\right\rceil-t-1)=n-k-\left\lceil\frac{kt}{r}\right\rceil+t+1= d = m_2.\] Now, it follows that,  $H_1\equiv\left[\dfrac{X_1}{Y_1}\vline I_{m_1} \right]$ and  $H_2\equiv\left[\dfrac{X_2}{Y_2}\vline I_{m_2} \right]$. Hence, we can say that $H_1$ and $H_2$ have full row rank with $m_1=d-1$ and $m_2=d$.\\
For the later part, at first we notice that $\left\lceil\frac{kt}{r}\right\rceil-t-1$ is non-negative, as $\left\lceil\frac{kt}{r}\right\rceil>t$ and if $\left\lceil\frac{kt}{r}\right\rceil=t+1$, then $H_2=H$. Now, to prove the main part, here we use the fact that $d\leq d_2\leq d_1$, since the above entries of those $n_2$ or $n_1$ columns of $H$, which resemble to the columns of $H_2$ or $H_1$, are all zeros. Again from the Singleton bound for the code $\C_1$, $d_1\leq m_1+1=d$.\\Combining both, we get $d_2 = d_1 = d$. Therefore, it is easy to observe that $\C_1$ forms a $q$-ary MDS code and $\C_2$ becomes a $q$-ary almost MDS code, as the Singleton defect of $\C_2$ is $1$.
\end{proof}
\section{Conclusion}
We study $(n, k, r, t)_i$ LRC through the standard parity check matrix approach. We provide the structural form of the parity check matrix for an $(n, k, r, t)_i$ LRC, which helps to discover different intrinsic algebraic properties of the code. We disclose how the membership matrix, local groups (subsets), and the parity check matrix are interlinked. Besides that, we prove some bounds by utilizing the parity check matrix tools. Additionally, precise constructions of some optimal codes are placed using the parity check matrix, which also performs well for non-binary cases. Finally, we give a generalized result on optimal $(n, k, r, t)_i$ LRC.

\section*{Acknowledgement}
We are thankful to Dr. Saikat Roy, Research Associate, Department of Mathematics, IIT Bombay, India for his constructive suggestions which improved the overall outfit of the paper. The first author of the paper would like to thank the Ministry of Education (MoE), Government of India, for financial support to carry out this research. This work is also supported by Science and Engineering Research Board (SERB), DST, India, Grant No. MTR/2021/000611.

\end{document}